\newtheorem{theorem}{Theorem}
\begin{document}

\title{Generation of Innovative and Sparse Encoding
Vectors for Broadcast Systems with Feedback
\thanks{This work was partially supported by a grant from the University Grants Committee of the Hong Kong Special Administrative Region, China (Project No. AoE/E-02/08).}}

\author{
\IEEEauthorblockN{Ho Yuet Kwan}
\IEEEauthorblockA{Department of Electronic Engineering\\
City University of Hong Kong\\
Email: hykwan@cityu.edu.hk} \and
\IEEEauthorblockN{Kenneth W. Shum}
\IEEEauthorblockA{Institute of Network Coding\\
The Chinese University of Hong Kong\\
Email: wkshum@inc.cuhk.edu.hk}
\and
\IEEEauthorblockN{Chi Wan Sung}
\IEEEauthorblockA{Department of Electronic Engineering\\
City University of Hong Kong\\
Email: albert.sung@cityu.edu.hk}
}

\IEEEoverridecommandlockouts

\maketitle

\begin{abstract}
In the application of linear network coding to wireless broadcasting with feedback, we prove that the problem of determining the existence of an innovative encoding vector is NP-complete when the finite field size is two. When the finite field size is larger than or equal to the number of users, it is shown that we can always find an encoding vector which is both innovative and sparse. The sparsity can be utilized in speeding up the decoding process.  An efficient algorithm to generate innovative and sparse encoding vectors is developed. Simulations show that the delay performance of our scheme with binary finite field outperforms a number of existing schemes in terms of average and worst-case delay.
\end{abstract}

\section{Introduction}
Linear network coding provides an excellent solution to the wireless broadcasting problem in terms of reliability and channel utilization \cite{LNC1,LNC2}. The idea is to send encoded packets that are obtained by taking linear combinations over a finite field of all the original packets. The encoding vector specifies the coefficients for the linear combination and is determined by the transmitter. An encoded packet together with a header which contains the corresponding encoding vector is broadcasted to all users. An encoded packet is said to be {\em innovative to a user} if the corresponding encoding vector is not in the subspace spanned by the encoding vectors already received by that user. It is called {\em innovative} if it is innovative to all users who have not yet received enough packets for decoding. Obviously, if all the encoded packets generated by the transmitter for transmission are {\em innovative}, the total number of packet transmissions for all users to obtain the complete set of packets can be minimized.

To utilize the radio channel efficiently, it is important to generate innovative packets. In~\cite{DFT07}, it is shown that if the size of the finite field is equal to the number of users, an innovative packet can always be found. In \cite{KDF08}, the authors consider a system with perfect feedback, in which the transmitter knows the status of all users and tries to find an innovative packet by a probabilistic algorithm. This approach is shown to be {\em rate-optimal} if the underlying finite field is sufficiently large. The average number of transmissions is analyzed in~\cite{XYWZ08}. By exploiting the feedback information from users, some authors develop algorithms to generate instantly decodable network-coded packets in \cite{ID1, ID2} so that innovative packets can be decoded once the packets are available at the receivers without waiting for the complete reception of the full set of packets. Randomized broadcast coding without utilizing any feedback is analyzed in~\cite{EOM06}. The computation with large finite field may be costly for mobile hand-held devices. In order to reduce encoding and decoding complexity, random linear network code over the binary field without feeding back what the receivers have received is considered in~\cite{GTK08,HPFL09}. This approach lowers computational complexity at the expense of larger number of retransmissions.

When the finite field size is small, an innovative encoding vector may not exist. In Section~\ref{sec:NPhard}, we prove that the problem of determining the existence of innovative encoding vector is NP-complete. A related result is also obtained in~\cite{ERCS07}, where the broadcast channel is assumed to be noiseless, and the problem of minimizing the number of packets required to finish off the file transmission with the binary finite field is shown to be NP-complete. In Section IV, we show that we can always find a sparse and innovative encoding vector with at most $K$ non-zero components using a deterministic algorithm called the cofactor method. In Section~\ref{sec:PE}, the cofactor method is compared with some other transmission schemes by simulation, for both small and large finite fields.


\section{System Model and Problem Formulation}
We consider a wireless single-hop system consisting of one transmitter and $K$ receivers/users. We denote $S$ as the source node and $U_i$ as the $i$-th receiver, where $i \in \{1,2,\ldots,K\}$.
The source node $S$ wants to broadcast a file to all receivers via a wireless channel, which is modeled as a broadcast erasure channel.
The input to the broadcast erasure channel is a $q$-ary alphabet. We will also call a $q$-ary alphabet a packet.
Each receiver successfully receives the transmitted packet with probability $1-P_e$, independent of each other, where $P_e$ denotes the erasure probability. An erased packet is unrecoverable and discarded, while a successfully received packet is assumed to be error-free. We assume that there is a feedback channel from each receiver to the source. Upon receiving a packet successfully, a receiver sends an acknowledgement to the source node. We assume that the feedback channel has no delay and no error. The source node keeps track of the status of each receiver. The transmitted packet is a function of the source file and the acknowledgements from the $K$ receivers.

In this paper, we focus on transmission schemes with linear network coding. The alphabet size $q$ is a power of prime and the alphabet set is identified with the finite field $GF(q)$.
The file is packetized into $N$ packets. The transmitted packet is a linear combination of the $N$ packets, with coefficients drawn from $GF(q)$. An encoding vector is an $N$-vector whose components are the $N$ coefficients used in the generation of a transmitted packet. Each user returns an acknowledgement to the source node if a packet is received successfully, until he has already received $N$ packets whose encoding vectors are linearly independent over $GF(q)$. In order to minimize the delay of each user, it is crucial to generate encoding vectors which are innovative to all users.

Our objectives are  (i) to determine whether an innovative encoding vector exists, and (ii) to devise an effective algorithm for generating innovative and sparse encoding vectors.

\section{NP-Completeness in Finding an Innovative Encoding Vector when $q=2$}
\label{sec:NPhard}

If the field size $q$ is larger than or equal to $K$, it is known that an innovative encoding can always be found~\cite{DFT07}. Indeed, the number of non-zero encoding vectors which are {\em not} innovative to user $i$ is equal to $q^{r_i}$, where $r_i$ is the rank of the subspace spanned by the encoding vectors already received by user~$i$ and $r_i < N$ , and hence by the union bound, the number of non-zero vectors which are not innovative is at most
$  (q^{r_1}-1)+(q^{r_2}-1) + \ldots + (q^{r_K}-1)$.
When $q \geq K$, the number of non-zero and non-innovative encoding vector is strictly less than $q^N-1$, and hence we can always find an innovative packet.

When the underlying finite field is small, an innovative encoding vector may not exist.

{\bf Problem} $q$-$\mathsf{IEV}$: A problem instance consists of $K$ matrices $\mathbf{C}_i$ over $GF(q)$, $i=1,2,\ldots, K$, and each matrix has $N$ columns. Determine whether there is an $N$-dimensional vector over $GF(q)$ which is not in the row space of $\mathbf{C}_i$, for all $i$.

\begin{theorem}
2-$\mathsf{IEV}$ is NP-complete.
\end{theorem}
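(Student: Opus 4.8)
The plan is to establish both membership in NP and NP-hardness. Membership is the easy direction: a candidate vector $\mathbf{v} \in GF(2)^N$ serves as a polynomial-size certificate, and for each $i$ one checks whether $\mathbf{v}$ lies in the row space of $\mathbf{C}_i$ by Gaussian elimination in time polynomial in $N$ and the number of rows. Such a $\mathbf{v}$ witnesses a ``yes'' instance precisely when it escapes every row space, so $2$-$\mathsf{IEV} \in \mathsf{NP}$.

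For hardness I would reduce from $3$-$\mathsf{SAT}$. The key reformulation is the dual one: ``$\mathbf{v}$ is not in the row space of $\mathbf{C}_i$'' is equivalent to ``$\mathbf{v}$ violates at least one parity check of $\mathbf{C}_i$'', i.e. there is a vector $\mathbf{h}$ in the orthogonal complement of the row space with $\mathbf{h}\cdot\mathbf{v}=1$. Thus each matrix contributes a constraint that is a disjunction of parity conditions $\mathbf{h}\cdot\mathbf{v}=1$, and finding an innovative vector amounts to satisfying all these disjunctions simultaneously. This is exactly the shape of a CNF formula whose literals are $GF(2)$-linear forms.

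Given a $3$-CNF formula with variables $x_1,\dots,x_n$ and clauses $C_1,\dots,C_m$, I would work in $GF(2)^N$ with $N=n+1$, reserving one extra coordinate $t$ to supply the constant $1$ that homogeneous subspaces cannot produce on their own. For each clause $C_i=(\ell_{i,1}\vee\ell_{i,2}\vee\ell_{i,3})$ I define three check vectors: a positive literal $x_a$ gives $\mathbf{w}=\mathbf{e}_a$ (so the parity reads off $x_a$), while a negative literal $\bar{x}_a$ gives $\mathbf{w}=\mathbf{e}_a+\mathbf{e}_t$ (so, when $t=1$, the parity reads off $x_a\oplus 1=\bar{x}_a$). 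I then take $\mathbf{C}_i$ to be any generator matrix of the orthogonal complement of the span of these three check vectors, which is computable in polynomial time; then $\mathbf{v}$ escapes the row space of $\mathbf{C}_i$ iff clause $C_i$ is satisfied under $t=1$. Finally I add one matrix $\mathbf{C}_0$ whose row space is the hyperplane $\{t=0\}$, so that escaping it forces $t=1$. This produces $K=m+1$ matrices, each with $N$ columns.

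Correctness then follows in both directions: any innovative $\mathbf{v}$ must have $t=1$ (from $\mathbf{C}_0$) and satisfy every clause-disjunction, yielding a satisfying assignment; conversely a satisfying assignment together with $t=1$ gives a vector outside all row spaces. I expect the main obstacle to be precisely the handling of negation. Because row spaces are linear subspaces through the origin, the parity literals are homogeneous and cannot express the constant needed to encode $\bar{x}_a$, which is why the auxiliary coordinate $t$ must be introduced and pinned to $1$ by a dedicated hyperplane constraint; checking that this pinning is consistent with, and does not over-constrain, the clause gadgets is the delicate part of the argument.
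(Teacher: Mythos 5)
Your proposal is correct and matches the paper's proof essentially step for step: the paper also reduces from 3-$\mathsf{SAT}$ with $N=n+1$, encodes a positive literal $x_a$ as the check vector $\mathbf{e}_a$ and a negated literal as $\mathbf{e}_a+\mathbf{e}_{n+1}$, defines each $\mathbf{C}_i$ as a basis of the orthogonal complement of these check vectors, and adds an extra $(m+1)$-st matrix dual to $[\mathbf{0}_n\ 1]$ to pin the last coordinate to $1$. Your ``delicate part'' (homogeneity of row spaces forcing the auxiliary coordinate) is exactly the mechanism the paper uses, so there is no gap.
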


\begin{proof}
The idea is to reduce the 3-$\mathsf{SAT}$ problem, well-known to be NP-complete~\cite{GareyJohnson}, to the 2-$\mathsf{IEV}$ problem. Recall that the 3-$\mathsf{SAT}$ problem is a Boolean satisfiability problem, whose instance is a Boolean expression written in conjunctive normal form with three variables per clause (3-$\mathsf{CNF}$), and the question is to decide if there is some assignment of TRUE and FALSE to the variables such that the given Boolean expression has a TRUE value.

Let $E$ be a given Boolean expression with $n$ variables $x_1,\ldots, x_n$, and $m$ clauses in 3-CNF. We want to reduce the 3-$\mathsf{SAT}$ problem to the 2-$\mathsf{IEV}$ problem with $N=n+1$ packets and $K=m+1$ users.

To the $i$-th clause ($i=1, 2, \ldots, m$), we first construct a $3\times N$ matrix $\mathbf{B}_i$. If the $j$-th literal ($j=1,2,3$) in the $i$-th clause is $x_k$, then let the $k$-th component in the $j$-th row of $\mathbf{B}_i$ be 1, and the other elements be all zero. Otherwise, if the $j$-th literal in the $i$-th clause is $\neg x_k$, then let the $k$-th and the $(n+1)$-st component in the $j$-th row of $\mathbf{B}_i$ be 1, and the remaining components be all zero. Let $\mathbf{C}_i$ be the matrix whose rows form a basis of the orthogonal complement of the row space of $\mathbf{B}_i$. We will use the fact that a vector $\mathbf{v}$ is in the row space of $\mathbf{C}_i$ if and only if $\mathbf{B}_i \mathbf{v}^T = 0$.

Consider an example with $n=4$ Boolean variables. From the clause $\neg x_1 \vee \neg x_2 \vee  x_3$, we get
\[\small \mathbf{B}_i =
\begin{bmatrix}
1& 0& 0& 0& 1 \\
0& 1& 0& 0& 1 \\
0& 0& 1& 0& 0
\end{bmatrix}, \
 \mathbf{C}_i = \begin{bmatrix}
 0 & 0 & 0 & 1 & 0 \\
 1 & 1 & 0 & 0 & 1
 \end{bmatrix}.
\]
It can be verified that each row in $\mathbf{B}_i$ is orthogonal to the rows in $\mathbf{C}_i$, i.e.,
the row space of $\mathbf{C}_i$ is the orthogonal complement of the row space of $\mathbf{B}_i$.

For the extra user, user $m+1$, let $\mathbf{B}_{m+1}$ be the $1\times (n+1)$ matrix $[\mathbf{0}_n 1]$, where $\mathbf{0}_n$ stands for the $1\times n$  all-zero vector. The problem reduction can be done in polynomial time.

Let $\mathbf{x} = [x_1\ x_2\ \ldots x_n]$ be a Boolean row vector and $\hat{\mathbf{x}} = [\mathbf{x}\;1]$. Obviously, any solution $\mathbf{x}$ to a 3-$\mathsf{SAT}$ problem would cause the product $\mathbf{B}_j \hat {\mathbf{x}}^T$ a non-zero vector for $j=1, 2, \ldots, m$ and $[\mathbf{0}_n 1]\hat {\mathbf{x}}^T \neq 0$. Therefore $\hat{\mathbf{x}}$ is not in the row space of $\mathbf{C}_j$ for all $j$. Hence $\hat{\mathbf{x}}$ is also a solution to the derived 2-$\mathsf{IEV}$ problem.


Conversely, any solution to the derived 2-$\mathsf{IEV}$ problem also yields a solution to the original 3-$\mathsf{SAT}$  problem as well. Let $\mathbf{c} = [c_1\ c_2\ \ldots c_n\ c_{n+1}] \in GF(2)^{n+1}$ be a solution to the derived 2-$\mathsf{IEV}$ problem. Note that we must have $c_{n+1} = 1$ because of $\mathbf{B}_{m+1}$. Let  $i$ be an integer between 1 and~$m$. Since $\mathbf{c}$ is not in the row space of $\mathbf{C}_i$, the product $\mathbf{B}_i \mathbf{c}^T$ is a non-zero vector, for otherwise $\mathbf{c}$ would belong to the orthogonal complement of $\mathbf{B}_i$. Hence, if we assign TRUE to $x_k$ if $c_k=1$ and FALSE to $x_k$ if $c_k=0$, for $k=1,2,\ldots,n$, then the $i$-th clause will have a TRUE value. Since this is true for all $i$, the whole Boolean expression also has a TRUE value.

The problem 2-$\mathsf{IEV}$ is clearly in NP, since it is efficiently verifiable. Hence it is NP-complete.
\end{proof}

\section{Generation of Sparse Encoding Vectors} \label{sec:SEV}

After receiving $N$ packets whose encoding vectors are linearly independent over $GF(q)$, a user can recover the source data by solving a system of $N$ linear equations. The standard Gaussian elimination requires $O(N^3)$ operations over $GF(q)$. One way to reduce the decoding complexity is to choose encoding vectors which are sparse. A vector is called {\em $w$-sparse} if there are no more than $w$ non-zero components.

\begin{theorem}
If $q \geq K$, we can find an innovative encoding vector which is $K$-sparse. \label{thm:cofactor}
\end{theorem}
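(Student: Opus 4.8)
The plan is to reduce the innovation requirement to a hyperplane-avoidance problem and then exploit the slack between $K$ and $q$ through a careful count. Let $V_i$ be the subspace spanned by the encoding vectors already received by user $i$. A user who still needs packets has $r_i = \dim V_i < N$, so its orthogonal complement $V_i^\perp$ is nonzero, and I would pick any nonzero normal vector $\mathbf{h}_i \in V_i^\perp$. Then $V_i \subseteq \{\mathbf{v} : \mathbf{h}_i \mathbf{v}^T = 0\}$, so any $\mathbf{v}$ satisfying $\mathbf{h}_i \mathbf{v}^T \neq 0$ for every $i$ is automatically innovative to all users. The task thus becomes: find a $K$-sparse $\mathbf{v} \in GF(q)^N$ lying off all $K$ hyperplanes $H_i = \{\mathbf{v} : \mathbf{h}_i \mathbf{v}^T = 0\}$.

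Next I would fix the support so that sparsity is built in. For each $i$ choose one coordinate $j_i$ with $h_{i,j_i} \neq 0$ (possible since $\mathbf{h}_i \neq \mathbf{0}$) and set $S = \{j_1, \ldots, j_K\}$, so that $|S| \le K$; any vector supported on $S$ is then $K$-sparse automatically. By this choice each restriction $\mathbf{h}_{i,S}$ is nonzero, so inside $GF(q)^{|S|}$ every constraint $\mathbf{h}_{i,S}\mathbf{u}^T = 0$ cuts out a genuine hyperplane through the origin, and it only remains to show that these $K$ hyperplanes cannot cover $GF(q)^{|S|}$. Each such hyperplane has $q^{|S|-1}$ points, and since all of them pass through the origin, the number of vectors killed by at least one constraint is at most
\[
1 + K\bigl(q^{|S|-1}-1\bigr) = K q^{|S|-1} - K + 1 .
\]
Comparing with $q^{|S|}$, a surviving vector exists whenever $q^{|S|-1}(K-q) < K-1$, which holds for all $q \ge K$ because the left side is nonpositive while the right side is nonnegative. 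Extending any surviving $\mathbf{u}$ by zeros outside $S$ yields the desired $\mathbf{v}$: it is nonzero, supported on at most $K$ coordinates, and satisfies $\mathbf{h}_i\mathbf{v}^T \neq 0$ for all $i$, hence is innovative.

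The step I expect to be delicate is precisely the boundary case $q = K$. A naive union bound gives $K q^{|S|-1}$ killed vectors, which equals $q^{|S|}$ exactly when $q = K$ and $|S| = K$, and so is not strong enough to conclude. The fix is to observe that every hyperplane contains the origin, so that point is multiply counted and must be added back only once; this is what restores the strict inequality at the boundary. I would also be careful that only users with $r_i < N$ impose constraints and that each contributes a legitimate nonzero normal, and I would note that this existence argument can be converted into the deterministic cofactor method by selecting the entries of $\mathbf{u}$ on $S$ one coordinate at a time.
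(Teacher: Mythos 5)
Your proof is correct, and it takes a genuinely different route from the paper. The paper's proof is constructive: it appends the unknown vector $\mathbf{x}$ to each user's matrix $\mathbf{C}_k$ (restricted to $r_k+1$ well-chosen columns), expands the determinant along the last row, and assigns at most $K$ variables sequentially so that every determinant is nonzero --- the point where $q \geq K$ enters is that each later assignment must avoid at most $K-1$ forbidden values. Your proof instead abstracts each user's constraint to a single hyperplane: any nonzero $\mathbf{h}_i \in V_i^\perp$ gives a hyperplane containing $V_i$, you force sparsity by restricting to a support $S$ that meets the support of every $\mathbf{h}_i$, and then a union bound that counts the common point $\mathbf{0}$ only once shows the $K$ hyperplanes cannot cover $GF(q)^{|S|}$ when $q\geq K$. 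The two arguments are related more closely than it may appear: the paper's vector of cofactors of the last row of $\hat{\mathbf{H}}_k$ \emph{is} a nonzero normal vector to the row space of $\mathbf{C}_k$ supported on $\mathcal{I}_k'$, so your $\mathbf{h}_i$ generalizes the paper's cofactors, and the paper's sequential assignment is exactly the derandomization you sketch at the end (group the constraints by the largest index of a nonzero entry of $\mathbf{h}_{i}$ restricted to $S$ and assign coordinates in increasing order; the first group forbids only zero, each later group has at most $K-1$ members). What your approach buys is a shorter, more transparent existence proof in which the role of $q\geq K$ is a one-line count; what the paper's buys is the efficient algorithm itself, which is its real objective --- your counting argument alone only guarantees existence, and naive search over the support costs $q^{|S|}$, so the constructive machinery (cofactors, Bareiss, incremental updates) cannot be discarded. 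One small point of rigor: ``left side nonpositive, right side nonnegative'' does not by itself give strict inequality; you need that both cannot vanish simultaneously, which would force $q=K=1$ and is excluded since $q\geq 2$ is a prime power --- worth one sentence, though it is exactly the boundary issue you already flagged.
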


\begin{proof}
Suppose that user $k$, for $k=1,2,\ldots, K$, has received $r_k$ packets whose encoding vectors are linearly independent.
Let $\mathbf{C}_k$ be the $r_k \times N$ matrix obtained by putting together the $r_k$ encoding vectors. We want to find a $K$-sparse innovative encoding vector $\mathbf{x} = [x_1\ x_2\ \ldots \ x_N]$.

Since $\mathbf{C}_k$ is full-rank, we can find $r_k$ columns of $\mathbf{C}_k$ which are linearly independent. Let $\mathcal{I}_k$ be a set of indices of $r_k$ linear independent columns in $\mathbf{C}_k$. For each $k$, we arbitrarily pick a column whose index is not in $\mathcal{I}_k$.  We call this the {\em extra column} and let $\mathcal{I}_k'$ be the union of $\mathcal{I}_k$ and the index of this extra column. The cardinality of $\mathcal{I}_k'$ is $r_k+1$. For each $k=1,2,\ldots, K$, we construct an $(r_k+1) \times (r_k+1)$ matrix $\hat{\mathbf{H}}_k$, by first appending the vector $\mathbf{x} = [x_1 \ x_2 \ \ldots \ x_N]$ to the bottom of matrix $\mathbf{C}_k$, and then deleting all columns of the resulting matrix except the columns with indices in $\mathcal{I}_k'$.

For each $k$, we compute the $r_k+1$ cofactors of the entries in the last row of $\hat{\mathbf{H}}_k$. Let $x_{i_k}$ be the variable with largest index in the last row of $\hat{\mathbf{H}}_k$ whose cofactor is non-zero. The column indices $i_1,\ldots, i_K$ so obtained may not be distinct. Let $\mathcal{J} \triangleq \{j_1,j_2,\ldots, j_s\}$ be the set of distinct indices such that $\mathcal{J} = \{i_1,i_2,\ldots, i_K\}$ and $j_1 < j_2 < \ldots< j_s$. Also, for $t=1,2,\ldots, s$, we let $\mathcal{K}_t$ be the set of users such that $k\in\mathcal{K}_t$ if and only if $i_k = j_t$.
We remark that $\mathcal{J}$ contains at most $K$ distinct indices, i.e., $s\leq K$.

We obtain a $K$-sparse innovative encoding vector as follows. First, we set all variables $x_i$, for $i\notin \mathcal{J}$, to zero. Then we assign values to $x_{j_1}, x_{j_2}, \ldots, x_{j_s}$ sequentially, so that the determinant of $\hat{\mathbf{H}}_k$ is nonzero for all $k$. For $k \in \mathcal{K}_1$, the last row of $\hat{\mathbf{H}}_k$ has only one variable, namely $x_{j_1}$, whose value is not yet assigned (the rest are all set to zero). The cofactor of $x_{j_1}$ in $\hat{\mathbf{H}}_k$ is non-zero. If we expand the determinant of $\hat{\mathbf{H}}_k$ in the last row, we see that the determinant can be written as $b_{i_k} x_{i_k}$, where $b_{i_k}$ is the cofactor of $x_{i_k}$ in  $\hat{\mathbf{H}}_k$. We have a non-zero value if $x_{j_1}$ is non-zero, for all  $k \in \mathcal{K}_1$. We can assign any non-zero element of $GF(q)$ to $x_{j_1}$, and make $\hat{\mathbf{H}}_k$ non-zero for all $k\in\mathcal{K}_1$.

Inductively, suppose that the values of $x_{j_1}, \ldots x_{j_{t-1}}$ have been assigned. Consider the determinants of $\hat{\mathbf{H}}_k$ for $k\in \mathcal{K}_t$. The only variable in the last row of $\hat{\mathbf{H}}_k$ which has not been assigned a value yet is $x_{j_t}$. If we expand the determinant on the last row, we obtain a linear polynomial in the form of $a_{i_k} + b_{i_k}x_{j_t}$, where $a_{i_k}$ is a constant and $b_{i_k}$ is the cofactor of $x_{j_t}$ in $\hat{\mathbf{H}}_k$. There are at most $K$ such degree-one polynomials, and thus we can assign a value to $x_{j_t}$ such that all determinants of $\hat{\mathbf{H}}_k$ are non-zero. Here we have used the assumption that $q \geq K$. Note that the assignment of $x_{j_t}$ does not affect the determinants of previous users with indices in $\mathcal{K}_1  \cup \mathcal{K}_2 \cup \cdots \cup \mathcal{K}_{t-1}$, because $j_t$ either does not appear in $\mathcal{K}_1  \cup \cdots \cup \mathcal{K}_{t-1}$ or the corresponding cofactor in $\hat{\mathbf{H}}_\ell$ is equal to zero for $\ell\in \mathcal{K}_1  \cup \cdots \cup \mathcal{K}_{t-1}$.

After the end of the process, we have chosen the values for $x_1,x_2,\ldots, x_N$ such that $|\hat{\mathbf{H}}_k|$ is non-zero for all $k=1,2,\ldots, K$. This encoding vector is innovative to all users and contains at most $K$ non-zero components.
\end{proof}

We call the the method described in the proof of Theorem~\ref{thm:cofactor} the {\em cofactor method}. Using the cofactor method, we can produce innovative and $K$-sparse encoding vectors. For the decoding, the number of non-zero coefficients in the linear system is no more than $KN$.

{\em Example 1.}
When $q=3$, $K=3$ and $N=3$, consider

\vspace{-0.5cm}

\[ \small
 \mathbf{C}_1 = \begin{bmatrix}
 0 & 1 &0
 \end{bmatrix}, \
 \mathbf{C}_2 = \begin{bmatrix}
 1 & 0 & 1 \\
 0 & 1 & 1
 \end{bmatrix},
\ \mathbf{C}_3 = \begin{bmatrix}
1 & 0 & 0 \\
0 & 2 & 0
 \end{bmatrix}.
\]
Suppose that $\mathcal{I}_1' = \{1, 2\}$, $\mathcal{I}_2'=\mathcal{I}_3'=\{1,2,3\}$. We have
\[ {\small
  \mathbf{\hat H}_1\! =\! \begin{bmatrix}
 0 & 1  \\
 x_1 & x_2
 \end{bmatrix},
 \mathbf{\hat H}_2\! =\! \begin{bmatrix}
 1 & 0 & 1 \\
 0 & 1 & 1 \\
 x_1 & x_2 & x_3  \end{bmatrix},
\mathbf{\hat H}_3\! =\! \begin{bmatrix}
1 & 0 & 0 \\
0 & 2 & 0\\
 x_1 & x_2 & x_3 \end{bmatrix}}.
\]
In $\hat{\mathbf{H}}_1$ the cofactors of $x_1$ and $x_2$ are $-1$ and 0 respectively. Hence, $i_1 = 1$.  In $\hat{\mathbf{H}}_2$, all cofactors of $x_1$, $x_2$ and $x_3$ are non-zero. We thus have $i_2 = 3$. In $\hat{\mathbf{H}}_3$, the cofactor of $x_3$ is nonzero, and so $i_3=3$. The index set $\mathcal{J}$ is equal to $\{1,3\}$. The two index sets of users are $\mathcal{K}_1 = \{1\}$ and $\mathcal{K}_2 =\{2,3\}$.
By the cofactor method, we first assign  0 to $x_2$. Then we go through the variable indices in $\mathcal{J}$ in ascending order. For $x_1$, we can assign any nonzero value to $x_1$. For example, we pick $x_1 = 1$. Once $x_1$ and $x_2$ are fixed, we compute the determinants of $\hat{\mathbf{H}}_2$ and $\hat{\mathbf{H}}_3$, which are $-1+x_3$
and $2x_3$ respectively.
Finally, we want to assign a value to  $x_3$ such that $-1+x_3\neq 0$ and $2x_3\neq 0$. The only choice in this example is  $x_3=2$. The resulting encoding vector is $[1 \ 0 \ 2]$.

\smallskip

In the cofactor method, the main complexity is related to the computation of $r+1$ cofactors in an $(r+1) \times (r+1)$ matrix. A straightforward calculation of an $r\times r$ determinant requires $O(r^3)$ arithmetic operations. The calculation of all cofactors in a matrix would require $O(r^4)$ operations per each user in each step. We can use a more efficient algorithm, called the {\em Bareiss algorithm}. The number of arithmetic operations over $GF(q)$ required in the computation of cofactors per user can be reduced to~$O(N^3)$. Summing over all $K$ users, the complexity for computing all cofactors is $O(KN^3)$. The complexity of the rest of the cofactor method is of $O(K^2N)$. The overall complexity of the cofactor method is $O(KN^3+K^2N)$. If Jaggi-Sanders algorithm in \cite{JSCEEJT} is applied to solve the encoding problem, the complexity is $O(K N^2 (K+N))$. It means that the cofactor method is no worse than the algorithm in \cite{JSCEEJT} in terms of the encoding complexity. But certainly the encoding vector produced by Jaggi-Sanders algorithm is not sparse. Details on the Bareiss algorithm is given in the appendix.

The cofactor method assumes that $q\geq K$. If $q < K$, the cofactor method may fail to find an assignment of the $x_i$'s such that all determinants are non-zero. In that case, we set those $x_i$'s to zero and the encoding vectors so generated may not be innovative. But anyway, the returned encoding vector is  $K$-sparse. Hence we can still apply the cofactor method for the case $q=2$ to obtain $K$-sparse encoding vectors, which are innovative to only a fraction of the $K$ users.


In \cite{SSK11}, the problem of generating the sparest innovative is considered, and is shown to be NP-hard when $q\geq K$.

\section{Performance Evaluation} \label{sec:PE}

We evaluate the cofactor method via simulations. In the simulations, we divide the transmission into two phases. The source node first transmits all packets one by one uncoded. The $K$ users acknowledge the packets they have successfully received. The source node sets up $K$ matrices $\mathbf{C}_k$, for $k=1,2\ldots, K$. The rows of $\mathbf{C}_i$ are the encoding vectors received by user~$k$. Since the packets are uncoded in the first phase, each row of $\mathbf{C}_k$ contains exactly one nonzero component. We initialize $\mathcal{I}_i$ to be the set of non-zero columns in $\mathbf{C}_k$.  In the second phase, we transmit the packets using the encoding vectors generated by the cofactor method.

Each simulation points involved 1000 random realizations and we assume that $N=32$ and $P_e=0.3$. The worst-case delay is defined as the average of total number of transmissions for $S$ to ensure that all users receive an intact file over 1000 random realizations. The average delay means the average number of transmissions for $S$ so that an intact file can be received by a user. For the decoding complexity, we count the number of additions and multiplications in decoding. In our simulations, an addition operation involving two non-zero operands is counted. A multiplication operation is counted when none of the two operands is 1 or 0.

\begin{figure}
\epsfxsize=2.7in \centerline{\epsffile{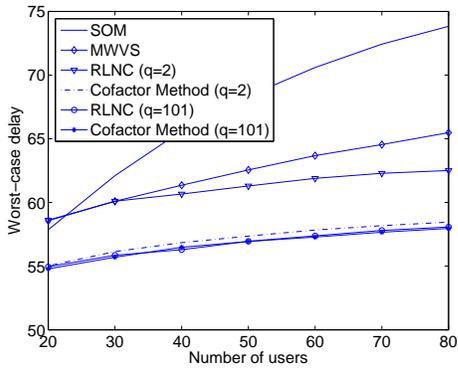}} \caption{The worst-case delay vs the number of users} \label{f1}
\end{figure}

\begin{figure}
\epsfxsize=2.7in \centerline{\epsffile{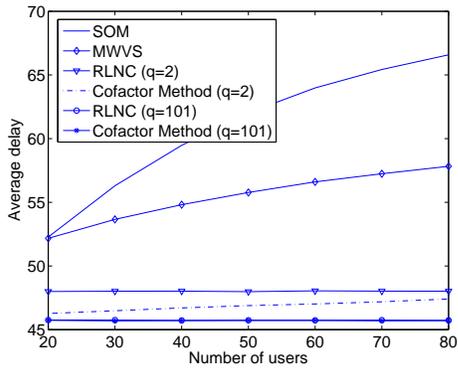}} \caption{The average delay vs the number of users} \label{f2}
\end{figure}

\begin{figure}
\epsfxsize=2.7in \centerline{\epsffile{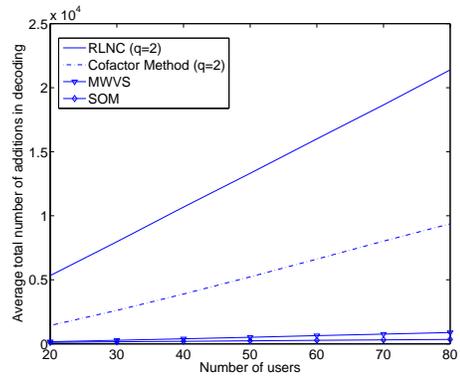}} \caption{The average total no. of additions vs the number of users ($q$=2)} \label{f3}
\end{figure}

\begin{figure}
\epsfxsize=2.7in \centerline{\epsffile{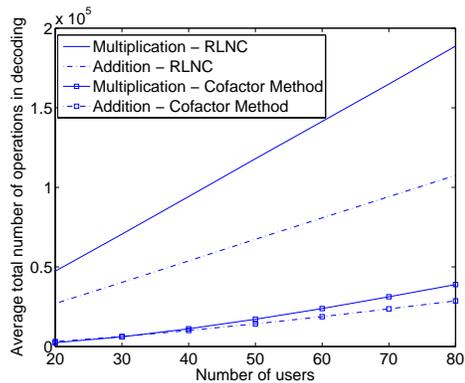}} \caption{The average total no. of operations vs the number of users  ($q$=101)} \label{f4}
\end{figure}

Figure~\ref{f1} shows the worst-case delay performance of our system with the cofactor method, the random linear network code (RLNC) scheme in which the components are selected according to a uniform distribution, the sorted opportunistic method (SOM) in \cite{ID1} and the maximum weight vertex search (MWVS) algorithm in \cite{ID2} for encoding vector generations, where both SOM and MWVS generate instantly decodable packets. It is found that, for $q=2$, the cofactor method always performs better than RLNC, SOM and MWVS in terms of the worst-case delay. In addition, we also find that the worst-case delay performance of the cofactor method with a small finite field size ($q=2$) is comparable to that of RLNC with a large finite field size ($q=101$). Next, we consider the average delay performance. According to information theory, the best we can do is to have $N/(1-P_e) = 45.7$ transmissions on average. In Figure~\ref{f2}, we observe that both the cofactor method and RLNC with large enough finite field size ($q=101$) can achieve the limit. From the figure, we also see that although all concerned methods may not be optimal when $q=2$, the cofactor method always results in a smaller average delay.

The decoding algorithms in most of the previous work are basically Gauss-Jordan elimination except the instantly decodable schemes in \cite{ID1, ID2}. We implement the Gauss-Jordan elimination for sparse matrix in our simulation. Note that the $K$-sparse property of the cofactor method implies an upper bound on the number of non-zero entries in an encoding vector. In practice, the average number of non-zero entries is significantly less than both $K$ and $N$ even for $K > N$. As a result, significant decoding complexity reduction is expected for a system with the cofactor method. Figures~\ref{f3} and~\ref{f4} show the average total number of operations for all users in the system when $q=2$ and $q=101$, respectively. The cofactor method indeed yields significant reduction in both the average total number of addition and multiplication operations when compared with RLNC. From Figure~\ref{f3}, we observe that, with both SOM and MWVS which are instantly decodable, a receiver enjoys a low decoding complexity at the expense of larger delay. As a result, the cofactor method which always generates sparse encoding vectors is a promising choice in terms of delay performance and decoding complexity.

\section{Conclusions}
We devise a cofactor method to generate $K$-sparse encoding vector. When $q\geq K$, it is guaranteed that the resulting encoding vector is innovative, and hence the broadcast system is delay-optimal.
The sparsity can be exploited in devising faster decoding algorithm. Simulation result shows that the cofactor method outperforms RLNC, SOM and MWVS in terms of both the worst-case delay and average delay. On the other hand, when $q=2$, the problem of determining the existence of an innovative encoding vector is NP-complete.


\appendix

\label{app}
The Bareiss algorithm is a fraction-free algorithm for computing determinant~\cite{Yap2000}. To illustrate the idea, we apply Bareiss algorithm to an $n \times n$ matrix $\mathbf{M}$ whose elements are integers and the last row consists of indeterminates $x_1$, $x_2, \ldots, x_n$. At the end of the algorithm, the entry in the lower-right corner of $\mathbf{M}$ is a linear polynomial in $x_1$, $x_2,\ldots, x_n$, and the coefficient of $x_i$ is the corresponding cofactor of $x_i$ in the original $\mathbf{M}$. We remark that this is an in-place algorithm, and the complexity is in the order of $n^3$.

\begin{algorithm}
   \caption{Bareiss algorithm}
   \label{algorithm:Bareiss}
   {\em Input:} An $n \times n $ matrix $\mathbf{M}$. Assume that all principle minors of $\mathbf{M}$ are nonzero.\\
   {\em Notation:} Let $m_{ij}$ denote the $(i,j)$-entry of $\mathbf{M}$, and $m_{00} \triangleq 1$.

   \begin{algorithmic}
       \FOR{$k = 1,\ldots, n-1$}
           \STATE Compute $m_{ij} \leftarrow \frac{m_{kk}m_{ij} - m_{ik} m_{kj}}{m_{k-1,k-1}}$,
           for $i,j = k+1, \ldots, n$.
       \ENDFOR
   \end{algorithmic}
   {\em Output:} Return the $(n,n)$-entry of $\mathbf{M}$.
\end{algorithm}

{\em Example 2.} Consider
${\small
 \mathbf{M} = \begin{bmatrix} 1 & 2 & 3 \\ 4 & 5 & 6 \\ x_1 &x_2 & x_3 \end{bmatrix}}$,
as an example. After the first pass of the for-loop ($k=1$), the partial result is
\[ {\small
 \mathbf{M} = \begin{bmatrix}
 1 & 2 & 3 \\
 4 & -3 & -6 \\
 x_1 &x_2-2x_1 & x_3-3x_1 \end{bmatrix}}.
\]
After the end of the algorithm, we have
\[ {\small
 \mathbf{M} = \begin{bmatrix}
 1 & 2 & 3 \\
 4 & -3 & -6 \\
 x_1 &x_2-2x_1 & -3x_1 + 6x_2-3x_3 \end{bmatrix}}.
\]
The coefficients of $x_1$, $x_2$ and $x_3$ of the polynomial in the $(3,3)$-entry are the cofactors
$
\begin{vmatrix} 2&3 \\ 5&6 \end{vmatrix}$,
$-\begin{vmatrix} 1&3 \\ 4&6 \end{vmatrix}$, and
$\begin{vmatrix} 1&2 \\ 4&5 \end{vmatrix}$ respectively.


Furthermore, the algorithm can be run incrementally. Suppose that only the first $r$ rows in a matrix $\mathbf{C}$ is available,
\vspace{-0.05cm}
 \[ {\small \mathbf{C} =
\begin{bmatrix}
 c_{11} & c_{12} & \cdots & c_{1r} & \cdots & c_{1n}\\
\vdots & \vdots & \ddots & \vdots & \vdots &\vdots \\
c_{r,1} & c_{r,2} & \cdots & c_{r,r} & \cdots & c_{r,n}\\
* & * & \cdots & * &* &*\\
x_1 & x_2 & \cdots & x_r & \cdots & x_{n}
\end{bmatrix}}.
\]
The entries marked by ``$*$'' are not known yet and will be revealed later. We can apply the Bareiss algorithm to the submatrix obtained by removing the ``$*$'' entries and the right $n-r$ columns. When the value of the $(r+1)$-st row is known, we can run the Bareiss algorithm again on the submatrix obtained by removing rows $r+2$ to $n-1$ and the $n-r-1$ columns on the right. We can see that the $r^2$ entries in the first $r$ rows and the first $r$ columns are the same as before and we do not need to re-calculate them. Only the calculation of the $2r+1$ new entries are required. For each user, the source node essentially runs the Bareiss algorithm on an $N\times N$ matrix, and the complexity per user is $O(N^3)$. Summing over all users, the complexity involving the computation of the cofactors is $O(K N^3)$.

\smallskip
Acknowledgement: The authors are grateful to Prof. Wai Ho Mow and Dr. Kin-Kwong Leung for fruitful email discussions.




\end{document}